\newcommand{\version}{\today}
\theoremstyle{plain}
\newtheorem{thm}{THEOREM}[section]
\newtheorem{lm}[thm]{LEMMA}
\theoremstyle{definition}
\newtheorem{exam}[thm]{EXAMPLE}
\newtheorem{remark}[thm]{Remark}
\theoremstyle{remark}
\newcommand{\upchi}{\raise1pt\hbox{$\chi$}}
\newcommand{\R}{{\mathord{\mathbb R}}}
\newcommand{\tr}{{\rm Tr}}
\numberwithin{equation}{section}
\newcommand{\un}{{\rm 1\kern -2.5pt l}}
\begin{document}

\markboth{\scriptsize{CL \version}}{\scriptsize{CL \version}}

\def\mn{{\bf M}_n}
\def\hn{{\bf H}_n}
\def\hnp{{\bf H}_n^+}
\def\hmnp{{\bf H}_{mn}^+}
\def\h{{\cal H}}

\title{A trace  inequality of Ando, Hiai and Okubo and a monotonicity property of the Golden-Thompson inequality}
\author{\vspace{5pt} Eric A. Carlen$^1$ and
Elliott H. Lieb$^{2}$ \\
\vspace{5pt}\small{$1.$ Department of Mathematics, Hill Center,}\\[-6pt]
\small{Rutgers University,
110 Frelinghuysen Road
Piscataway NJ 08854-8019 USA}\\
\vspace{5pt}\small{$2.$ Departments of Mathematics and Physics, Jadwin
Hall,} \\[-6pt]
\small{Princeton University, Princeton, NJ
  08544}\\
 }
\date{\version}
\maketitle
\footnotetext                                                                         
[1]{\, Work partially
supported by U.S. National Science Foundation
grant DMS 2055282..   }
\footnotetext
[2]{ \, \copyright \,2022  by the authors. This paper may be reproduced, in its
entirety, for non-commercial purposes.}

\begin{abstract}

The Golden-Thompson trace inequality which states that $Tr\, e^{H+K} \leq Tr\, e^H e^K$ has proved to be very useful in
quantum statistical mechanics. Golden used it to show that the classical free energy is less than the quantum one.
Here  we make this G-T inequality more explicit by proving that for some operators, notably the operators of interest in quantum mechanics,
$H=\Delta$ or $H= -\sqrt{-\Delta +m}$ and $K=$ potential,
$Tr\, e^{H+(1-u)K}e^{uK}$ is  a monotone increasing function of the parameter $u$ for $0\leq u \leq 1$.
Our proof utilizes an inequality of Ando, Hiai and Okubo (AHO): $Tr\, X^sY^tX^{1-s}Y^{1-t} \leq
Tr\, XY$ for positive operators X,Y and for $\tfrac{1}{2} \leq s,\,t \leq 1 $ and $s+t \leq \tfrac{3}{2}$. The obvious conjecture that this inequality should hold up to $s+t\leq 1$, was proved false by Plevnik. We give a different proof of AHO and also give more counterexamples
in the $\tfrac{3}{2}, 1$ range. More importantly we show that the inequality  conjectured in AHO does indeed hold in this
range if $X,Y$ have a certain positivity property -- one which does hold for quantum mechanical operators, thus enabling us to prove our G-T monotonicity theorem.

\end{abstract}

\leftline{\footnotesize{\qquad Mathematics subject classification numbers:  47A63, 15A90}}
\leftline{\footnotesize{\qquad Key Words: convexity, concavity, trace inequality, 
entropy, operator norms}}

\section{Introduction} \label{intro}

In 2000, Ando, Hiai and Okubo \cite{AHO} (AHO) considered several inequalities for traces of products of two
positive semidefinite matrices $X$ and $Y$, of which the two simplest were
\begin{equation} \label{1.1}
     |\tr [ X^sY^t X^{1-s}Y^{1-t}] |   \leq \tr[ XY]
\end{equation}
and
\begin{equation} \label{1.1b}
\tr  [X^{1/2}Y^{1/2}X^{1/2}Y^{1/2} ] \leq     |\tr  [X^sY^t X^{1-s}Y^{1-t}] |   
\end{equation}
with $1/2\leq s \leq 1$ and $1/2\leq t\leq 1$.

Note that the absolute value, or at least a real part is necessary for either \eqref{1.1} or \eqref{1.1b} to make sense; $\tr  [X^sY^t X^{1-s}Y^{1-t}]$ may be a complex number. 

Ando Hiai and Okubo succeeded in proving both inequalities when $X$ and $Y$ were $2$ by $2$ matrices, or more generally, 
when both $X$ and $Y$ have at most two distinct eigenvalues \cite[Corollary 4.3]{AHO}.  They also proved \eqref{1.1}
 when $ s+t \leq 3/2$, but could only prove \eqref{1.1b} when either $s=1/2$ or $t=1/2$. 
 They  raised the question as to whether the inequalities \eqref{1.1}  and \eqref{1.1b} 
hold over the entire range $1\leq s+t\leq 2$. In addition to proving the positive results mentioned above (and some generalizations discussed below) they remarked that the behavior of the function $(s,t) \mapsto  |\tr  [X^sY^t X^{1-s}Y^{1-t} ]| $ on the whole interval $[1/2,1]\times[1/2,1]$ is ``is rather complicated for general $n\times n$ positive semidefinite matrices''. 

The question they raised attracted the attention of other researchers.  In particular, Bottazzi, Elencwajg, Larotonda and Varela \cite{BELV} gave another proof, for the case $s=t$, that \eqref{1.1} is valid for $s+t \leq 3/2$.   Instead of the majorization techniques used in \cite{AHO}, they used the Lieb-Thirring inequality and the H\"older inequality for matrix trace norms. Using these tools, they showed that for $z = 1/4 + iy$ or $z= 3/4+iy$, $y\in \R$, 
\begin{equation}\label{BELV1}
 |\tr [ X^zY^z X^{1-z}Y^{1-z}] |   \leq \tr [XY]
\end{equation}
and then used then used the maximum modulus principle to conclude that \eqref{1.1} is valid for $s=t$, $1/4 \leq t \leq 3/4$. Moreover, they proved that unless $A$ and $B$ commute, this inequality is strict, and thus for any given $X$ and $Y$, the inequality extends to a wider interval, depending on $X$ and $Y$.   However,  16 years after the original work of Ando, Hiai and Okubo, Plevnik \cite{P}  finally found a counterexample to the conjectured inequality  \eqref{1.1} in
the missing range $3/2 \leq s+t\leq 2$, as well as a counterexample to  \eqref{1.1b}.

We, unaware of these developments, attempted  to show a monotonicity property 
for the  Golden-Thompson  inequality \cite{G65,T65}
and were led to exactly the same inequality that \cite{AHO} had discussed 22 years earlier. Our proof for the 
$1\leq s+t\leq 3/2$ range is a little different and we shall give that proof here.  We also identify interesting conditions on $X$ and $Y$ under which \eqref{1.1} and \eqref{1.1b} do hold for all $0 \leq s,t \leq 1$, and apply this to prove our conjecture on the Golden-Thompson inequality in these cases.
We shall also give a systematic construction of  counterexamples for the $3/2 \leq s+t\leq 2$ range that
complement the example in \cite{P} and show that not only is \eqref{1.1b} false, it is even possible for $\tr  [X^sY^t X^{1-s}Y^{1-t}]$ to be negative when $X$ and $Y$ are real positive semidefinite matrices.

\section{Conditions for validity of the AHO inequalities}

We recall the Lieb-Thirring inequality \cite{LT76}, which says that for all $r \geq 1$,  and any positive semidefinite $n\times n$ matrices,
\begin{equation}\label{LT1}
\tr[(B^{-1/2}AB^{1/2})^r] \leq \tr[A^rB^r]\ .
\end{equation}
Later, Araki \cite{A90} proved that the inequality reverses for $0 < r < 1$.  It was shown by Friedlander and So \cite{FS}  that for $r>1$, the inequality is strict unless $A$ and $B$ commute. 

In the following, and in the whole of this paper, $X$ and $Y$ are positive semidefinite matrices. We will use \eqref{LT1} to estimate $\|X^{1/p}Y^{1/p}\|_p$ for various values of $p\geq1$. Since
$$
\|X^{1/p}Y^{1/p}\|_p^p = \tr[ (Y^{1/p}X^{2/p}Y^{1/p})^{p/2}]\ ,
$$
we may apply \eqref{LT1} to get an upper bound on  $\|X^{1/p}Y^{1/p}\|_p$ taking $r =p/2$ provided $p/2 \geq 1$, or equivalently $1/p \leq 1/2$. (Otherwise, by Araki's complement to \eqref{LT1}, we would get a lower bound.)  In summary:
\begin{equation}\label{LT2}
\|X^{1/p}Y^{1/p}\|_p^p \leq \tr[XY]  \quad{\rm for\ all}\quad 0 < 1/p \leq1/2\ .
\end{equation}

As in \cite{BELV}, we shall use the generalized H\"older inequality for trace norms (see, e.g., Simon's book \cite{S05}.  
For any $3$ $n\times n$ matrices $A$, $B$ and $C$, and  any  $p,q,r \geq 1$ with $1/p + 1/q + 1/r =1$,
\begin{equation}\label{holder}
|\tr[ABC]| \leq \|ABC\|_1 \leq \|A\|_p\|B\|_q\|C\|_r\ .
\end{equation}
(This generalizes in the obvious way to products of arbitrarily many matrices.)

The next theorem is a small generalization of the result in \cite{AHO} in that we consider $4$ positive semidefinite  matrices instead of only $2$.

\begin{thm}\label{AHO1}    Let $X$, $Y$, $Z$, and $W$ be positive semidefinite, and let $1/2 \leq s,t$, $t+ s \leq 3/2$. Then 
\begin{equation}\label{q2b2}
\left|\tr[X^tY^s Z^{1-t}W^{1-s}]\right|  \leq  (\tr[XY])^{t+s-1}(\tr[YZ])^{1-t}  (\tr([WX]))^{1-s} 
\end{equation}
In particular, taking $Z=X$ and $W= Y$, we obtain \eqref{1.1} under these conditions on $s$ and $t$. 
\end{thm}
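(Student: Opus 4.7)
The plan is to follow the three-factor Hölder approach that \cite{BELV} used in the $Z=X$, $W=Y$, $s=t$ case and adapt it to the asymmetric setting of four matrices. The idea is to choose Hölder conjugate exponents $p,q,r$ so that the three Hölder factors \emph{are precisely} $X^{1/p}Y^{1/p}$, $Y^{1/q}Z^{1/q}$, $W^{1/r}X^{1/r}$, whose $p$-, $q$-, $r$-norms can be bounded by \eqref{LT2} in terms of exactly the three traces appearing on the right-hand side of \eqref{q2b2}. Reading off the desired exponents on the right-hand side suggests setting $1/p=s+t-1$, $1/q=1-t$, and $1/r=1-s$. One checks immediately that $1/p+1/q+1/r=1$ and that the hypotheses $s,t\ge 1/2$ and $s+t\le 3/2$ translate exactly into $p,q,r\ge 2$, which is the range where the Lieb--Thirring bound \eqref{LT2} applies.

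Next I would establish the key algebraic identity
\[
\tr[X^tY^sZ^{1-t}W^{1-s}] = \tr\bigl[(X^{1/p}Y^{1/p})(Y^{1/q}Z^{1/q})(W^{1/r}X^{1/r})\bigr].
\]
This just uses that powers of the same positive matrix commute, together with cyclicity of the trace: expanding the right-hand side gives $X^{1/p}Y^{1/p+1/q}Z^{1/q}W^{1/r}X^{1/r}$, and cyclically moving the trailing $X^{1/r}$ to the front reassembles it as $X^{1/r+1/p}Y^sZ^{1-t}W^{1-s} = X^tY^sZ^{1-t}W^{1-s}$. With the factorization in hand, the generalized Hölder inequality \eqref{holder} yields
\[
|\tr[X^tY^sZ^{1-t}W^{1-s}]| \;\le\; \|X^{1/p}Y^{1/p}\|_p\,\|Y^{1/q}Z^{1/q}\|_q\,\|W^{1/r}X^{1/r}\|_r,
\]
and \eqref{LT2} applied to each factor gives the three claimed traces raised to the powers $1/p=s+t-1$, $1/q=1-t$, $1/r=1-s$ respectively, completing the proof. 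The special case $Z=X$, $W=Y$ then immediately recovers \eqref{1.1} on the triangle $s,t\ge 1/2$, $s+t\le 3/2$.

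The main obstacle is purely one of bookkeeping: spotting the \emph{right} three-way partition of the product. It is not evident a priori that a single factorization should work for all $(s,t)$ in the triangle, and one has to be guided by the form of the target inequality to realize that the exponent on $X$ should be split as $t=1/p+1/r$ and the exponent on $Y$ as $s=1/p+1/q$. A small technical point to handle is the boundary: when $s+t=1$ we have $p=\infty$, so $X^{1/p}Y^{1/p}$ should be interpreted as the identity with operator-norm bound $1$, matching the trivial factor $(\tr[XY])^0=1$ on the right; similarly, $q=2$ or $r=2$ at the edges $t=1/2$ or $s=1/2$ is exactly the trivial equality case of Lieb--Thirring. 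Beyond that, the argument is a clean two-step: cyclically reorganize, then apply Hölder followed by Lieb--Thirring.
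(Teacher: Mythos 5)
Your proposal is correct and follows essentially the same route as the paper: both write the trace as $\tr[(X^{s+t-1}Y^{s+t-1})(Y^{1-t}Z^{1-t})(W^{1-s}X^{1-s})]$ via cyclicity, apply the three-factor H\"older inequality with exponents $1/(s+t-1)$, $1/(1-t)$, $1/(1-s)$, and then bound each factor with the Lieb--Thirring estimate \eqref{LT2}, the condition $s+t\le 3/2$ being exactly what keeps the first exponent in the admissible range. Your only additions are the explicit verification of the cyclic-rearrangement identity and a remark about the degenerate case $p=\infty$, both of which are sound.
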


\begin{proof}  Since $s,t \geq 1/2$,  $t \geq 1-s$.  Write $t = (1- s) + (t+s-1)$, and both summands are non-negative. By cyclicity of the trace,
\begin{eqnarray*}
 \tr[X^tY^s Z^{1-t}W^{1-s}] &=& \tr[X^{t+s-1}Y^s Z^{1-t}W^{1-s}X^{1-s}]\\
  &=& \tr[(X^{t+s-1} Y^{t+s-1}) (Y^{1-t}Z^{1-t}) (W^{1-s}X^{1-s})]\ .
  \end{eqnarray*}
  Define $r_1 := t+s-1$, $r_2 :=1-t$ and $r_3 := 1-s$.  Then we have
  $$ \tr[X^tY^s Z^{1-t}W^{1-s}]   = \tr[ (X^{r_1}Y^{r_1}) (Y^{r_2}Z^{r_2}) (W^{r_3}X^{r_3})]\ .$$
By what was noted above, $r_1,r_2,r_3 \geq 0$, and of course $r_1+r_2+r_3 =1$. 
Thus by H\"older's inequality
\begin{equation}\label{holdlim}
\left|  \tr[X^tY^s Z^{1-t}W^{1-s}]    \right|  \leq  \|X^{r_1}Y^{r_1}\|_{1/r_1}  \|Y^{r_2}Z^{r_2}\|_{1/r_2}  \|W^{r_3}X^{r_3}\|_{1/r_3}\  .
\end{equation}
We may now apply \eqref{LT2} provided $r_1$, $r_2$ and $r_3$ are all no greater than $1/2$.
Since  $s,t \geq 1/2$, it is always the case that  $r_2,r_3 \leq 1/2$, while $r_1\leq 1/2$ if and only if $t + s \leq \frac32$.  Hence under this condition  \eqref{q2b2} is proved. 
\end{proof}

\begin{remark}The assumption that the two powers of $X$ sum to $1$ is not a real restriction. Given
two arbitrary positive powers $a,b$ we may rename $X^{a+b}$  to be $X$, and define $s := \max\{a,b\}/(a+b)$, and similarly for $Y$. 
\end{remark}

\begin{remark} In \cite{AHO},  Theorem~\ref{AHO1} was generalized to $n$ $X$'s and $n$ $Y$'s as follows, and our method 
of proof of Theorem \ref{AHO1} using the Lieb-Thirring inequality likewise generalizes.  This theorem 
will not be needed in the rest of  this paper, and we do not discuss this here.
\end{remark}

\begin{remark} The fact that this method of proof cannot yield the inequality for all $s,t$, even in cases such as those described below for which the inequality is true for all $s,t$, has nothing to do with what is given up in the application of the Lieb-Thirring inequality: Consider the case $s=t$, $Z=X$ and $W= Y$. Then \eqref{holdlim} becomes
\begin{equation}\label{holdlim2}
\left|  \tr[X^tY^t X^{1-t}Y^{1-t}]    \right|  \leq  \|X^{2t-1}Y^{2t-1}\|_{1/(2t-1)}  \|Y^{1-t}X^{1-t}\|_{1/(1-t)}  \|Y^{1-t}X^{1-t}\|_{1/(1-t)}\  .
\end{equation}
Hence for $X,Y>0$,
$$
\lim_{t \uparrow1} \|X^{2t-1}Y^{2t-1}\|_{1/(2t-1)}  \|Y^{1-t}X^{1-t}\|_{1/(1-t)}  \|Y^{1-t}X^{1-t}\|_{1/(1-t)} = \|XY\|_1\ ,
$$
and in general, $\|XY\|_1 > \tr[XY]$.
\end{remark}

We now present several results that provide conditions on $X$ and $Y$  under which \eqref{1.1} and \eqref{1.1b}  are valid for all $s,t\in [1/2,1]\times [1/2,1]$.  We will use the following lemma:

\begin{lm}\label{conlem} Suppose that $X$ and $s$ are such that in a basis in which $Y$ is diagonal,
\begin{equation}\label{convex}
(X^s)_{i,j}(X^{1-s})_{j,i} \geq 0 \quad{\rm for\ all}\qquad i,j\ .
\end{equation}
Then for all $t\in [1/2,1]$, 
\begin{equation}\label{convex2}
 \tr] X^{1/2}Y^{1/2}X^{1/2} Y^{1/2}] \leq |\tr  [X^sY^t X^{1-s}Y^{1-t}] |   \leq \tr [ XY]\ .
\end{equation}
\end{lm}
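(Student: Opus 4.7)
My plan is to work in an orthonormal basis in which $Y$ is diagonal with eigenvalues $y_i \geq 0$. In this basis the central quantity
$$f(t) \; := \; \tr[X^s Y^t X^{1-s} Y^{1-t}] \; = \; \sum_{i,j} (X^s)_{ij}(X^{1-s})_{ji} \, y_i^{1-t} y_j^{t}$$
becomes a non-negative sum under the hypothesis $(X^s)_{ij}(X^{1-s})_{ji} \geq 0$, so $f(t) \geq 0$ and in particular $|f(t)| = f(t)$. This is essentially the only place where the positivity assumption is invoked.

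Next I would establish two properties of the map $t\mapsto f(t)$ on $[0,1]$: (i) \emph{convexity}, since each factor $y_i^{1-t}y_j^{t} = e^{(1-t)\log y_i + t\log y_j}$ is convex in $t$ and the coefficients are non-negative; and (ii) the \emph{symmetry} $f(1-t)=f(t)$, obtained by a swap of $(i,j)$ once one notes that Hermiticity of $X^s$ and $X^{1-s}$ gives $(X^s)_{ji}(X^{1-s})_{ij} = \overline{(X^s)_{ij}(X^{1-s})_{ji}}$, while the hypothesis forces these coefficients to be real, making the coefficient matrix symmetric. Since $f(0) = f(1) = \tr[XY]$, convexity on $[0,1]$ immediately gives the upper bound $f(t) \leq \tr[XY]$; convexity combined with symmetry about $t = 1/2$ gives that $f$ is minimized there, so $f(1/2) \leq f(t)$ for every $t \in [1/2,1]$.

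It remains to show $\tr[X^{1/2}Y^{1/2}X^{1/2}Y^{1/2}] \leq f(1/2) = \tr[X^s Y^{1/2} X^{1-s} Y^{1/2}]$. For this I would diagonalize $X$ in the form $X=\sum_m \lambda_m |\phi_m\rangle\langle\phi_m|$ and expand $g(s) := \tr[X^s Y^{1/2} X^{1-s} Y^{1/2}]$ as
$$g(s) \; = \; \sum_{m,n} |\langle \phi_m | Y^{1/2} | \phi_n\rangle|^2 \, \lambda_m^{s} \lambda_n^{1-s}.$$
This has the same convex-plus-symmetric shape as $f(t)$, but now with manifestly non-negative coefficients and with $g(s) = g(1-s)$ coming straight from cyclicity of the trace; no hypothesis is needed. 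Hence $g$ is minimized at $s = 1/2$, yielding $g(1/2) = \tr[X^{1/2}Y^{1/2}X^{1/2}Y^{1/2}] \leq g(s)$, and chaining everything gives
$$\tr[X^{1/2}Y^{1/2}X^{1/2}Y^{1/2}] \; \leq \; f(1/2) \; \leq \; f(t) \; = \; |f(t)| \; \leq \; \tr[XY]$$
for $t \in [1/2,1]$. The main subtlety I anticipate is bookkeeping rather than analysis: the hypothesis is used only to secure reality/non-negativity of $f$ and its $t$-symmetry, whereas the lower bound in the $s$-direction is hypothesis-free, a fact that is easy to misplace when writing the argument up.
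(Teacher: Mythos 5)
Your proposal is correct and follows essentially the same route as the paper: expand $f(t)$ in a basis diagonalizing $Y$, use the hypothesis to get real non-negative coefficients and hence convexity plus symmetry in $t$ (giving the upper bound and the reduction to $t=1/2$), then re-expand $g(s)=\tr[X^sY^{1/2}X^{1-s}Y^{1/2}]$ in an eigenbasis of $X$ where the coefficients $|(Y^{1/2})_{mn}|^2$ are automatically non-negative, which is exactly the paper's ``swap the roles of $X$ and $Y$'' step. The only small point you elide is that $f(0)=f(1)=\tr[XY]$ requires $Y>0$ (the convention $y^0$ for $y=0$ spoils both the identity and the pointwise convexity of $y_i^{1-t}y_j^{t}$); the paper handles this by first assuming $Y>0$ and then passing to $Y\geq 0$ by a limiting argument, which you should add.
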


\begin{remark}\label{Hadamard} The matrix $M_{i,j} := (X^s)_{i,j}(X^{1-s})_{j,i}$ is the Hadamard product of two positive matrices, namely $X^s$ and the transpose of $X^{1-s}$, and as such it is positive semidefinite. However the off-diagonal entries need not be positive or even real. 
\end{remark}

\begin{proof}   Assume first that $Y> 0$. Computing in any basis that diagonalizes $Y$, with the $j$th diagonal entry of $Y$ denoted by $y_j$,
$$
f(t) := \tr [X^sY^t X^{1-s}Y^{1-t} ] = \sum_{i,j} \left((X^s)_{i,j}(X^{1-s})_{j,i} \right) y_j^t y_i^{1-t}\ ,
$$
where now it is convenient to let $t$ range over $[0,1]$. Under the hypothesis \eqref{convex},  $f(t)$ is symmetric and convex in $t$. Hence its maximum occurs at $t=0$ and $t=1$, and its minimum occurs at $t=1/2$  Since $Y> 0$, $\lim_{t\uparrow  1}\tr  X^sY^t X^{1-s}Y^{1-t}  = \tr X^sY X^{1-s}  = \tr\ XY$. 
This proves that $\tr  X^sY^t X^{1-s}Y^{1-t} $ is real and satisfies 
$$
\tr [X^{s}Y^{1/2}X^{1-s} Y^{1/2}] \leq |\tr  [X^sY^t X^{1-s}Y^{1-t}] |   \leq \tr[ XY]
$$
Since $(Y^{1/2})_{i,j}(Y^{1/2})_{j,i} = |Y^{1/2}_{i,j}|^2$, we may now apply what was proved above with the roles of  $X$ and $Y$ interchanged to conclude that
$$
 \tr [X^{1/2}Y^{1/2}X^{1/2} Y^{1/2}] \leq     \tr [X^{s}Y^{1/2}X^{1-s} Y^{1/2}]\ .
$$
Finally, we obtain the same result assuming only $Y\geq 0$ using the obvious limiting argument. 
\end{proof}

Our first application of Lemma~\ref{conlem} is to pairs of operators of a sort that arise frequently in mathematical physics. For $X > 0$, define $H = -\log(X)$ so that  $X = e^{-H}$.  Suppose that in a basis in which $Y$ is diagonal, all off-diagonal entries of $H$ are non-positive; i.e.,
\begin{equation}\label{BD1}
H_{i,j} \leq 0 \quad{\rm for\ all}\quad i\neq j\ . 
\end{equation}
For example, this is the case if $H$ is the graph Laplacian on an unoriented graph (with the graph theorist's sign convention that the graph Laplacian is non-negative); see Example~\ref{GL} below.

It is well-known that under these conditions, as a consequence of the Beurling-Deny Theorem, \cite[Theorem 5]{BD}, the semigroup $e^{-sH}$ is positivity preserving, and so in particular $(e^{-sH})_{i,j} \geq 0$ for all $s$ and all $i,j$.
For the readers convenience, we recall the relevant part of their proof adapted to our setting:  Take $\lambda > 0$ sufficiently small that $I+ \lambda H$ is invertible.  Then for any vector $f$,
$\left(1 + \lambda H\right)^{-1}f$
is the unique minimizer of
$$
F(u) := \lambda \langle u, H u\rangle + \|u - f\|^2 \ ;
$$
the uniqueness follows from the strict convexity of $F$ for sufficiently small $\lambda > 0$. 
Under the condition \eqref{BD1},  when $f = |f|$, $F(|u|) \leq F(u)$.  Hence  $\left(1 + \lambda H\right)^{-1}f$ maps the positive cone into itself, and all entries of this matrix are non-negative. The same is evidently true of $\left(1 + \lambda H\right)^{-n}f$ for all $n$. Taking $\lambda = s/n$ and $n \to \infty$, the same is true of $e^{-sH}$ for all $s \geq 0$. 

\begin{thm}\label{BDthm}  Suppose that $H = -\log X$ satisfies \eqref{BD1} in a basis in which $Y$ is diagonal. Then \eqref{1.1} and \eqref{1.1b} are valid for all $s,t\in [1/2,1]\times[1/2,1]$. 
\end{thm}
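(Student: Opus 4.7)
The plan is to deduce Theorem~\ref{BDthm} directly from Lemma~\ref{conlem} by checking that the hypothesis \eqref{convex} holds for every $s \in [1/2,1]$. All the necessary machinery has just been assembled: Lemma~\ref{conlem} reduces the two desired inequalities \eqref{1.1} and \eqref{1.1b} to an entrywise positivity condition on $X^s$ and $X^{1-s}$ in a $Y$-diagonal basis, and the Beurling–Deny discussion preceding the theorem supplies exactly such positivity whenever $H = -\log X$ has nonpositive off-diagonal entries in that basis.

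More precisely, work throughout in a basis in which $Y$ is diagonal. Since $X = e^{-H}$ with $H$ satisfying \eqref{BD1}, the Beurling–Deny argument reviewed just before the theorem shows that $e^{-uH}$ preserves the positive cone for every $u \geq 0$, and hence
\[
(X^u)_{i,j} = (e^{-uH})_{i,j} \geq 0 \qquad \text{for all } i,j \text{ and all } u \geq 0 .
\]
Applying this with $u = s$ and $u = 1-s$ (both are nonnegative whenever $s \in [0,1]$, and in particular for $s \in [1/2,1]$) gives
\[
(X^s)_{i,j}(X^{1-s})_{j,i} \geq 0 \qquad \text{for all } i,j,
\]
which is precisely the hypothesis \eqref{convex} of Lemma~\ref{conlem}.

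With \eqref{convex} verified for each $s \in [1/2,1]$, Lemma~\ref{conlem} yields \eqref{convex2} for every $t \in [1/2,1]$, and this is exactly the pair of inequalities \eqref{1.1} and \eqref{1.1b} at the point $(s,t)$. Since $s$ was an arbitrary element of $[1/2,1]$, we obtain both inequalities on the full square $[1/2,1]\times[1/2,1]$, which is the assertion of the theorem. There is no real obstacle here: the content of the theorem lies entirely in recognizing that the Beurling–Deny positivity preservation is the natural structural assumption matching the convexity hypothesis of Lemma~\ref{conlem}. The only minor point to note is that the assumption $H = -\log X$ tacitly requires $X > 0$; no analogous restriction on $Y$ is needed because Lemma~\ref{conlem} already handles $Y \geq 0$ by the limiting argument in its proof.
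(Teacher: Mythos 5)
Your proof is correct and follows exactly the paper's own route: invoke the Beurling–Deny discussion to get $(X^u)_{i,j} \geq 0$ for all $u \geq 0$ in the $Y$-diagonal basis, observe this yields \eqref{convex}, and conclude via Lemma~\ref{conlem}. Your added remark about $X>0$ being tacitly required is a reasonable clarification but doesn't change the argument.
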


\begin{proof}  By the Beurling-Deny Theorem as explained above, for all $s> 0$
$$
(X^s)_{i,j} = (e^{-sH})_{i,j} \geq 0\ .
$$
It follows that \eqref{convex} is satisfied for all $s$, and now the conclusion follows from Lemma~\ref{conlem}.
\end{proof}

One may also use Lemma~\ref{conlem} to show that both \eqref{1.1} and \eqref{1.1b} are valid for $2\times 2$ matrices, as was already shown in \cite{AHO}:
Write 
$X = \left[\begin{array}{cc} a & z\\ \overline{z} &b\end{array}\right]$.  Then by the usual integral representation formula for $X^s$, $0 < s < 1$, 
$$(X^s)_{1,2} =  -z \left(\frac{\sin(\pi\alpha)}{\pi}\int_0^\infty \lambda^{s}  \frac{1}{(a+\lambda)(b+\lambda) - |z|^2}   {\rm d}\lambda\right)
$$
showing that for all $0 < \alpha < 1$, $(Y^\alpha)_{1,2}$ is a positive multiple of $-z$, and hence \eqref{convex} is always true. 

Our next theorem provides another class of examples of positive matrices $X$ and $Y$ for which \eqref{1.1} is true for all $1/2 \leq s,t \leq 1$.  A related theorem, for a version of \eqref{1.1} with  the operator norm in place of the trace, has recently been proved in \cite{ACPS} by quite different means.

\begin{thm}  Let $H$ and $K$ be arbitrary self-adjoint $n\times n$ matrices. Then there exists an $\alpha_0 > 0$ depending on $H$ and $K$ so that for all $\alpha< \alpha_0$,
with $X := e^{\alpha H}$ and $Y := e^{\alpha Y}$, \eqref{1.1} is valid  all $1/2 \leq s,t \leq 1$. 
\end{thm}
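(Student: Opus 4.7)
The plan is to argue perturbatively in $\alpha$. Write $f(s,t) := \tr[X^s Y^t X^{1-s} Y^{1-t}]$; with $X=e^{\alpha H}$, $Y = e^{\alpha K}$ and self-adjoint $H, K$, this is real and jointly analytic in $(s,t,\alpha)$. By cyclicity of the trace, $f(s,t) = \tr[XY]$ whenever one of $s, 1-s, t, 1-t$ vanishes (e.g.\ $f(1,t) = \tr[XY^tY^{1-t}] = \tr[XY]$). Hence $f(s,t) - \tr[XY]$ vanishes on the four edges of $[0,1]^2$ and factorizes analytically as
\[
f(s,t) - \tr[XY] = s(1-s)\,t(1-t)\,\Psi(s,t,\alpha).
\]
Since $s(1-s)t(1-t) \geq 0$ on the square and $f > 0$ for small $\alpha$, it suffices to produce $\alpha_0 > 0$ such that $\Psi \leq 0$ on $[0,1]^2$ for all $|\alpha| < \alpha_0$; this yields \eqref{1.1} on $[1/2,1]^2$.

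I would expand each exponential as a Taylor series, so that the coefficient of $\alpha^n$ in $f(s,t)$ is
\[
\sum_{i+j+k+l=n}\frac{s^i t^j (1-s)^k (1-t)^l}{i!\,j!\,k!\,l!}\,\tr[H^iK^jH^kK^l],
\]
and group by the cyclic equivalence class of the word $H^iK^jH^kK^l$. For any \emph{block} class $H^aK^b$ (a cyclic word with a single block of $H$'s and a single block of $K$'s) the partial sum collapses via the binomial identity $[s+(1-s)]^a[t+(1-t)]^b = 1$ to $1/(a!\,b!)$, exactly matching the coefficient of $\tr[H^aK^b]$ in the Taylor expansion of $\tr[XY]=\tr[e^{\alpha H}e^{\alpha K}]$. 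For $n \leq 3$ every cyclic word of length $n$ on $\{H,K\}$ is of block type (an alternating four-block word requires $n \geq 4$), so the coefficients at orders $\alpha^0,\alpha^1, \alpha^2, \alpha^3$ agree identically with those of $\tr[XY]$ and contribute nothing to $\Psi$. At order $\alpha^4$ the only non-block cyclic class is $HKHK$, produced uniquely by $(i,j,k,l)=(1,1,1,1)$ with coefficient $s(1-s)t(1-t)$; correspondingly the coefficient of $\tr[H^2K^2]$ drops from $\tfrac14$ to $\tfrac14 - s(1-s)t(1-t)$. All other classes collapse as before. Therefore
\[
f(s,t) - \tr[XY] = \alpha^4\,s(1-s)t(1-t)\bigl(\tr[HKHK] - \tr[H^2K^2]\bigr) + O(\alpha^5),
\]
uniformly in $(s,t)\in[0,1]^2$.

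Expanding $\tr\bigl[[H,K]^*[H,K]\bigr]$ and using $[H,K]^* = -[H,K]$ yields the identity $\tr[H^2K^2] - \tr[HKHK] = \tfrac12\|[H,K]\|_2^2$, so
\[
\Psi(s,t,\alpha) = -\tfrac{\alpha^4}{2}\,\|[H,K]\|_2^2 + O(\alpha^5),
\]
with the remainder uniform on the compact set $[0,1]^2$. If $[H,K] = 0$ then $X, Y$ commute and \eqref{1.1} is a trivial equality; otherwise, choosing $\alpha_0$ small enough that the uniform $O(\alpha^5)$ correction cannot overcome the strictly negative leading term forces $\Psi \leq 0$ throughout $[0,1]^2$, yielding \eqref{1.1}. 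The main obstacle is the middle paragraph's combinatorial bookkeeping: verifying that the only cyclic word class responsible for $(s,t)$-dependence through order $\alpha^4$ is $HKHK$ (paired with $H^2K^2$), and that every block class telescopes by the binomial identity to its $\tr[XY]$ value. Once this cancellation is in hand, the rest of the argument is compactness plus an elementary Hilbert--Schmidt identity.
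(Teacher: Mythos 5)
Your proof is correct, and in fact it is more careful than the paper's own argument. Both are perturbative in $\alpha$, but they organize the computation differently. The paper writes $\tr[X^{1-s}Y^{1-t}X^sY^t]=\tr[XYZ]$ with $Z(s,t)=Y^{-t}X^sY^tX^{-s}$, expands $Z$ to second order to get $Z=I+\alpha^2st[H,K]+O(\alpha^3)$, and multiplies against a second-order expansion of $XY$; after several trace identities the leading correction is claimed to be proportional to $st\,\tr([H,K]^2)<0$. You instead expand each of the four exponentials, observe that $f(s,t)-\tr[XY]$ vanishes identically on the four edges of $[0,1]^2$ (a clean structural fact the paper does not exploit), and track cyclic word classes to show the first surviving contribution is $\alpha^4\,s(1-s)t(1-t)\bigl(\tr[HKHK]-\tr[H^2K^2]\bigr)=-\tfrac{\alpha^4}{2}s(1-s)t(1-t)\|[H,K]\|_2^2$. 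Your coefficient is the correct one; the paper's $st$ should in fact be $s(1-s)t(1-t)$, as one sees immediately from $f(1,1)=\tr[XY]$ (the $Z$-based expansion in the paper silently drops order-$\alpha^3$ pieces of $Z$ whose traces contribute at order $\alpha^4$, and the stated remainder bound $O(\alpha^3)$ is inconsistent with a genuine leading term of order $\alpha^4$). Your telescoping claim for block classes is the crux of the combinatorics and does check out — e.g., for the class $H^2K^2$ the eight contributing tuples sum to $\tfrac14-s(1-s)t(1-t)$, and for $H^3K$ they sum to $\tfrac16$; a slightly slicker way to see it is that for $n\le 3$ the $\alpha^n$-coefficient of $f-\tr[XY]$ is a polynomial of degree $\le 3$ in $(s,t)$ vanishing on all four edges, hence divisible by the degree-$4$ polynomial $s(1-s)t(1-t)$, hence zero; at $n=4$ the quotient must be a constant, which you can read off at $s=t=\tfrac12$. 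Both proofs buy the same theorem, but yours gives the exact leading-order formula on all of $[0,1]^2$, including the factorized boundary vanishing; the paper's buys a faster (if sloppier) computation confined to the region $s,t\ge\tfrac12$ where $st$ and $s(1-s)t(1-t)$ happen to have the same sign. One small point worth making explicit in your write-up: the $O(\alpha^5)$ uniformity in $(s,t)\in[0,1]^2$ follows because the coefficient of $\alpha^n$ is a polynomial in $(s,t)$ whose coefficients are bounded by $n^4\|H\|^n\|K\|^n/\!\lfloor n/4\rfloor!^4$ or similar, so the tail is uniformly summable.
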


\begin{proof}  If $H$ and $K$ commute, then it is obvious that  \eqref{1.1} is valid  all $1/2 \leq s,t \leq 1$, no matter what $\alpha>0$ may be. Hence we may assume without loss of generality that $[H,K] \neq 0$.   Also without loss of generality, we may  suppose that $H$ and $K$ are both contractions and $0 \leq \alpha \leq 1$.
Then by the spectral theorem,
$$
\left\Vert e^{\alpha H} - \left( I + \alpha H + \frac{\alpha^2}{2} H^2\right)\right\Vert  \leq e^{\alpha}\frac{\alpha^3}{6}\ ,
$$
and likewise for $K$, 
Thus
\begin{equation}\label{BCH1}
\left\Vert e^{\alpha H}e^{\alpha K}  - \left( I + \alpha H + \frac{\alpha^2}{2} H^2\right)  \left( I + \alpha K + \frac{\alpha^2}{2} K^2\right)\right\Vert  \leq e^{2\alpha}\frac{\alpha^3}{2}\ .
\end{equation}
Note that 
\begin{multline}\label{BCH2}
\left( I + \alpha H + \frac{\alpha^2}{2} H^2\right)  \left( I + \alpha K + \frac{\alpha^2}{2} K^2\right) =\\
I + \alpha(H+K) + \frac{\alpha^2}{2} (H+K)^2 +  \frac{\alpha^2}{2} [H,K] + R\ ,
\end{multline}
where $\|R\| \leq 3 \alpha^3$. 

Now writing $X = e^{\alpha H}$ and $Y = e^{\alpha K}$, 
$$
\tr[X^{1-s}Y^{1-t}X^s Y^t] =  \tr[XYZ]\ \qquad{\rm where}\qquad  Z(s,t) := Y^{-t}X^sY^tX^{-s}\ .
$$
Using \eqref{BCH1} and \eqref{BCH2},
we obtain 
\begin{equation}\label{stid}
\|Z(s,t)  -( I + \alpha^2 st[H,K])\| \leq C\alpha^3\ ,
\end{equation}
for some constant $C$ that can be easily estimated. 
Note that for all $s,t$, $Z(0,t) = Z(s,0) = I$.  For this reason, there cannot have been any terms proportional to $s^2$ or $t^2$ in the second order expansion.

Altogether we have 

\begin{eqnarray*}
\tr[X^{1-t}Y^{1-s}X^t Y^s]  &=& \tr\left[ \left(I + (K+H) + \frac12(K+H)^2 + \frac12 [H,K]\right) \left(I + st[H,k] \right)\right] +R_2\\
&=&  \tr[XY]\\
&+& st \tr\left[   \left(I + \alpha(K+H) + \frac{\alpha^2}{2}(K+H)^2 + \frac{\alpha^2}{2} [H,K]\right)[H,K] \right]  +R_3\ .
\end{eqnarray*}
where $\|R_2\|,\|R_3\| \leq C\alpha^3$ for some constant $C$. 
Evidently, $\tr[[H,K]] = \tr[[H[H,K]] = \tr[K[H,K]] =  \tr[[H^2[H,K]] = \tr[K^2[H,K]]  = 0$.
A simple computation shows that
$$
\tr[(HK+KH)(HK-KH)] = 0\ .$$
Therefore,
$$
\tr[X^{1-s}Y^{1-t}X^s Y^t]  -  \tr[XY] + st \alpha^2 \tr[H,K]^2 + \tr[R_4]\ .
$$
where $\|R_4\| \leq C\alpha^3$, and hence $\tr[R_4] \leq nC\alpha^3$. 
Evidently, since by hypothesis $[H,K] \neq 0$, 
$\tr[H,K]^2  < 0$.  Thus for all $\alpha$ sufficiently small, $\tr[X^{1-t}Y^{1-s}X^t Y^s]  -  \tr[XY] < 0$ for all $(s,t)\in [1/2,1]\times[1/2,1]$. 
\end{proof}

Of course, replacing $t$ by $1-t$ and $s$ by $1-s$, the same proof shows, with the same $\alpha_0$ that when $\alpha \leq \alpha_0$,
$$
\tr[X^{1-t}Y^{1-s}X^t Y^s]  =  \tr[XY] + st \alpha^2 \tr[H,K]^2  \pm C \alpha^3\ .
$$
Replacing $s$ by $is$ and $t$ by $it$ yields 
$$
\tr[X^{1-is}Y^{1-it}X^{is} Y^{it}]   = \tr[XY] - (st)^2 \tr[H,K]^2 + \mathcal{O}(\delta^6)\ .
$$
and hence
$[X,Y] \neq 0$,  and $\alpha$ sufficiently small, 
$$
\tr[X^{1-it}Y^{1-is}X^{it} Y^{is}]   > \tr[XY]\ .
$$
Thus the three lines argument in \cite{BELV} cannot hold for $s,t$ sufficiently close to $1$ or $0$.

\section{The monotonicity of the Golden--Thompson inequality}

Let $H$ and $K$ be self-adjoint $n\times n$ matrices. For $0 \leq u \leq 1$, define
\begin{equation}\label{fofu}
f_{H,K}(u) = \tr[e^{H+ (1-u)K}e^{uK}]\ .
\end{equation}
Then  $f(0) = \tr[e^{H+K}]$ and $f(1) = \tr[e^He^K]$, and by the Golden-Thompson inequality,
\begin{equation}\label{gt1}
\tr[e^{H+K}] \leq\tr[e^He^K]\ ,
\end{equation}
$f_{H,K}(0) \leq f_{H,K}(1)$. In this section we ask: When is $f_{H,K}(u)$ monotone increasing in $u$?  We shall prove that this is the case for an interesting class of pairs $(H,K)$ of self-adjoint matrices, and we shall show that it is not true in general.

\begin{remark}\label{sign}Observe that if one replaces $H$ by $H+aI$ and $K$ by $K+bI$,
\begin{equation}\label{gt2}
f_{H+aI,K+bI}(u) = e^{a+b}f_{H,K}(u)\ ,
\end{equation}
and hence whether or not  $f_{H+aI,K+bI}(u)$ is monotone increasing is independent of $a$ and $b$.  
\end{remark}

\begin{thm}\label{GTThm} Suppose that $K$ is diagonal and that  all off-diagonal entries of $H$  are non-negative.  Then $f_{H,K}(u)$ is monotone increasing. 
\end{thm}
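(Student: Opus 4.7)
The plan is to show $f'_{H,K}(u) \geq 0$ on $[0,1]$ by computing the derivative directly and reducing the integrand to a manifestly non-negative sum. Differentiating with Duhamel's formula for $\frac{d}{du}e^{H+(1-u)K}$, using $\frac{d}{du}e^{uK}=Ke^{uK}$, and then simplifying via cyclicity of the trace together with the commutation $[K,e^{uK}]=0$, I expect to arrive at
\[
\frac{d}{du}f_{H,K}(u) \;=\; \int_0^1 \tr\bigl[\,e^{uK}\,[K,e^{\lambda A}]\,e^{(1-\lambda)A}\,\bigr]\,d\lambda,
\qquad A := H+(1-u)K.
\]
The trick that makes this clean is to rewrite the boundary term $\tr[e^{A}Ke^{uK}]$ as $\int_0^1 \tr[e^{uK}Ke^{\lambda A}e^{(1-\lambda)A}]\,d\lambda$ (the integrand is $\lambda$-independent), so that it pairs against the Duhamel integral to produce the commutator $[K,e^{\lambda A}]$.

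I then work in a basis diagonalizing $K$, with eigenvalues $k_1,\ldots,k_n$. The hypothesis forces $H$ to be a real symmetric matrix in this basis: self-adjointness makes the diagonal real, and the off-diagonals are real by assumption. Hence $A$ is also real symmetric and retains the property that every off-diagonal entry is non-negative, so the Beurling--Deny argument recalled just before Theorem~\ref{BDthm}, applied to $-A$ in place of $H$, yields that $e^{sA}$ is entrywise non-negative and real symmetric for every $s\geq 0$.

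With this in hand, $([K,e^{\lambda A}])_{ij}=(k_i-k_j)(e^{\lambda A})_{ij}$ and the symmetry $(e^{(1-\lambda)A})_{ji}=(e^{(1-\lambda)A})_{ij}$ let me pair the $(i,j)$ and $(j,i)$ contributions in the trace to obtain
\[
\tr\bigl[\,e^{uK}\,[K,e^{\lambda A}]\,e^{(1-\lambda)A}\,\bigr]
\;=\; \sum_{i>j}\,(k_i-k_j)\bigl(e^{uk_i}-e^{uk_j}\bigr)\,(e^{\lambda A})_{ij}\,(e^{(1-\lambda)A})_{ij}.
\]
Each summand is non-negative: the pair $(k_i-k_j)$ and $(e^{uk_i}-e^{uk_j})$ has matching sign whenever $u\geq 0$, while the two matrix-entry factors are non-negative by Beurling--Deny. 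Hence the integrand, and therefore $f'_{H,K}(u)$, is pointwise non-negative, proving the theorem.

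The only delicate step I anticipate is the cyclicity bookkeeping that produces the clean commutator integrand; once that is in place, the argument is driven entirely by the entrywise positivity of $e^{sA}$ (exactly as used in Theorem~\ref{BDthm}) combined with the symmetrization $(i,j)\leftrightarrow(j,i)$ and the real-symmetry of $A$.
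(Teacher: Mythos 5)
Your proof is correct, and it takes a genuinely different route from the paper's. The paper's proof expands $e^{uK}=\sum_m \frac{u^m}{m!}K^m$, reduces each term to the AHO-type inequality
\[
\tr[BA] \;\geq\; \int_0^1 \tr\bigl[B^{1-t}A^{s}B^{t}A^{1-s}\bigr]\,{\rm d}t
\]
with $B=e^{H_u}$, $A=K^{m+1}$, $s=(m+1)^{-1}$, and invokes Theorem~\ref{BDthm} for each such term; because fractional powers of $K$ appear, it first needs the shift $K\mapsto K+bI$ from Remark~\ref{sign} to ensure $K\geq 0$. You instead represent the derivative in one stroke as
\[
f'_{H,K}(u) \;=\; \int_0^1 \tr\bigl[e^{uK}\,[K,e^{\lambda A}]\,e^{(1-\lambda)A}\bigr]\,{\rm d}\lambda,
\qquad A=H+(1-u)K,
\]
and then exploit the Beurling--Deny entrywise positivity of $e^{\lambda A}$ together with the $(i,j)\leftrightarrow(j,i)$ symmetrization to exhibit the integrand as $\sum_{i>j}(k_i-k_j)(e^{uk_i}-e^{uk_j})(e^{\lambda A})_{ij}(e^{(1-\lambda)A})_{ij}\geq 0$. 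Both proofs ultimately rest on the same Beurling--Deny entrywise positivity, but yours bypasses the power-series expansion and the fractional-power AHO machinery (Lemma~\ref{conlem} and Theorem~\ref{BDthm}) entirely, making it more self-contained and dispensing with the preliminary normalization $K\geq 0$. What the paper's route buys, on the other hand, is a clean illustration that the monotonicity of Golden--Thompson is a direct consequence of the AHO inequality holding on the full range $s+t\leq 2$ under the Beurling--Deny hypothesis --- which is the thematic point of the paper; your argument proves the theorem but does not go through \eqref{1.1}. One small remark: your identity holds term-by-term in $\lambda$, so the integrand is pointwise nonnegative, which is slightly stronger than what the paper's termwise argument in $m$ gives.
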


\begin{proof}  By Remark~\ref{sign}, we may assume that  $K \geq 0$.  It will be convenient to define $H_u = H + (1-u)K$.  Then 
\begin{eqnarray}\label{fp}
f'(u) &=&\tr[e^{H_u} K e^{uK}] - \int_0^1\tr[e^{(1-t)H_u} K e^{t H_u} e^{uK}]{\rm d}t \nonumber\\
&=& \sum_{m=0}^\infty\frac{u^m}{m!}  \left( \tr[e^{H_u} K^{m+1}] - \int_0^1 \tr[e^{(1-t)H_u} K e^{t H_u} K^m]{\rm d}t  \right)\ .
\end{eqnarray}
Now define $X = e^{H_u}$ and for each $m$, $Y = K^{m+1}$ and $s = (m+1)^{-1}$. With these definitions,
\begin{eqnarray*}
\tr[e^{H_u} K^{m+1}] - \int_0^1 \tr[e^{(1-t)H_u} K e^{t H_h} K^m]{\rm d}t &=& \tr[BA] - \int_0^1\tr [B^{1-t} A^s B^{t} A^{1-s}]{\rm d}t \ .\\
\end{eqnarray*}

Since $Y$ is diagonal, for each $u$, $-\log H_u$ has non positive off diagonal entries. By Theorem~\ref{BDthm},
$$
\tr[BA] - \int_0^1\tr [B^{1-t} A^s B^{t} A^{1-s}]{\rm d}t  \geq 0\ .
$$
Then by \eqref{fp}, $f'(u) \geq 0$. 
\end{proof}

\begin{exam}\label{GL} Let $\mathcal{G}$ be a graph with a finite set  of vertices $\mathcal{V}$. Let the edge set be $\mathcal{E}$; this is a subset of 
$\mathcal{V}\times \mathcal{V}$.  Suppose that $\mathcal{G}$ is a simple graph, meaning that $(x,x) \notin  \mathcal{E}$ for all $x\in \mathcal{V}$, and that 
$(x,y) \in \mathcal{E}$ if and only if $(y,x) \in \mathcal{E}$. Then the graph Laplacian, $\Delta_{\mathcal{G}}$ is defined by
$$
\Delta_{\mathcal{G}}f(f) = \sum_{\{ y\ :\ (x,y)\in \mathcal{E}\}}(f(x) - f(y))\ .
$$
In the natural basis, all off diagonal elements of the matrix representing $\Delta_{\mathcal{G}}$ are non-positive. Define $H_0 = \Delta_{\mathcal{G}}$, to obtain a non-negative ``free Hamiltonian'' as in the usual mathematical physics convention. 
Let $V$ be a self adjoint multiplication operator on $L^2(\mathcal{V}, \mu)$, where $\mu$ is the uniform probability measure on $\mathcal{V}$. In the natural basis, $V$ is diagonal.

Then by Theorem~\ref{GTThm},  
$$
f(u) :=  \tr[ e^{-(H_0+ (1-u) V)}e^{-uV}]
$$
is strictly monotone increasing in $u$. 
\end{exam}

\begin{exam}  Though we have given proofs in the context of matrices, it is is easy to see that the proofs extend to cover interesting infinite dimensional cases. Let $X = e^{\beta \Delta}$ where $\Delta$ is the Laplacian on $\R^d$ and $\beta > 0$. Let $V$ be a real valued function on $\R^d$, and let $V$ also denote multiplication by $V$ acting on $L^2(\R^d)$, which is in general unbounded. Let $Y = e^{-\beta V}$.  Then since $X^t$ has a positive kernel and $Y$ acts by multiplication on $L^2(\R^d)$, the proof of Theorem~\ref{GTThm}  is easily adapted to show that
$$
f(u)  :=   \tr[e^{-\beta (\Delta+(1-u)V)}e^{-\beta u V}]
$$
is monotone increasing in $u$. The same applies with $-\Delta$ replaced by $(-\Delta)^{1/2}$, another case that arises in physical applications. 
\end{exam}

\section{Counterexamples}

This section presents  the constructions of counter-examples showing that the inequalities \eqref{1.1} and \eqref{1.1b} cannot hold in general, even in the $3\times 3$ case, and showing  the monotonicity property established in Theorem~\ref{GTThm} under specified conditions  cannot hold in general. While counterexamples for 
 \eqref{1.1} and \eqref{1.1b} were found by Plevnik \cite{P}, our goal is to provide a systematic approach to their construction.   
 Plevnik provided two completely separate and purely numerical  counter-examples  to \eqref{1.1} and \eqref{1.1b}. 
 We provide a method for constructing a family of counter-examples that goes further in significant ways. For example, while Plevnik showed in 
 \cite[Example 2.5]{P} that \eqref{1.1b} can be violated, his example does not show that it is possible for $\tr[X^sY^yX^{1-s}Y^{1-t}]$ to be negative.  
 We show that this is the case.  Moreover, our construction shows that the failure of  the inequalities \eqref{1.1} and \eqref{1.1b}  as well as the 
 failure in general of the  monotonicity of the Golden-hompson Inequality described  in  Theorem~\ref{GTThm} are all closely connected: 
 Essentially one example undoes all three would-be conjectures.
 
We have seen in Lemma~\ref{conlem} that that if all of the entries of  $M_{i,j} := (X^s)_{i,j} (X^{1-s})_{j,i}$ are nonnegative, then   \eqref{1.1} and 
\eqref{1.1b} both hold. In constructing our counter-examples, we shall take $X$ to be real, and hence the entries of $M$ will be real for each $s$
 
 \begin{lm}\label{sym}
Let $y := (y_1,\dots y_n)$ be any vector in $\R^n$, Let $X$ be any positive semidefinite $n\times n$ matrix matrix, and let $0 \leq t \leq 1$. Let $M(s)$ 
denote the matrix $M_{i,j}(s) := (X^s)_{i,j} (X^{1-s})_{j,i}$.  Then for all $0 < s < 1$, 
\begin{equation}\label{oct}
\sum_{i,j=1}^n   M_{i,j}(s) (y_i - y_j)^2 \geq 0\  .
\end{equation}
\end{lm}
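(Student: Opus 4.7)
The plan is to recognize the sum as a trace expression and deduce nonnegativity from a convexity argument for $s\mapsto \tr[YX^sYX^{1-s}]$.

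First I would let $Y$ be the diagonal matrix $\mathrm{diag}(y_1,\dots,y_n)$ and expand the sum:
\[
\sum_{i,j} M_{i,j}(s)(y_i-y_j)^2 = \sum_i y_i^2 \sum_j M_{i,j}(s) + \sum_j y_j^2 \sum_i M_{i,j}(s) - 2\sum_{i,j} y_i y_j M_{i,j}(s).
\]
The row/column sums of $M(s)$ are trivial to compute from $X^{s}X^{1-s}=X$: we have $\sum_j M_{i,j}(s) = \sum_j (X^s)_{i,j}(X^{1-s})_{j,i} = X_{i,i}$, and similarly $\sum_i M_{i,j}(s) = X_{j,j}$. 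Moreover, since $Y$ is diagonal,
\[
\sum_{i,j} y_i y_j\, (X^s)_{i,j}(X^{1-s})_{j,i} = \tr[Y X^s Y X^{1-s}].
\]
Putting these together yields the identity
\[
\sum_{i,j} M_{i,j}(s)(y_i-y_j)^2 \;=\; 2\,\tr[X Y^2] \;-\; 2\,\tr[Y X^s Y X^{1-s}].
\]

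Next I would prove that $g(s):=\tr[YX^sYX^{1-s}]$ is convex on $[0,1]$ and satisfies $g(0)=g(1)=\tr[XY^2]$. Diagonalizing $X=U\Lambda U^*$ with $\Lambda=\mathrm{diag}(\lambda_1,\dots,\lambda_n)$, $\lambda_i\ge 0$, and setting $\widetilde Y := U^*YU$ (self-adjoint), a direct calculation gives
\[
g(s) \;=\; \sum_{i,k} |\widetilde Y_{i,k}|^{2}\, \lambda_i^{1-s}\lambda_k^{s}.
\]
Each function $s\mapsto \lambda_i^{1-s}\lambda_k^{s}$ is log-linear on the support of the weights and thus convex, and the coefficients $|\widetilde Y_{i,k}|^2$ are nonnegative, so $g$ is convex. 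The endpoints give $g(0)=\sum_i\lambda_i(\widetilde Y^2)_{i,i}=\tr[XY^2]$ and the symmetric calculation at $s=1$ yields the same value.

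Finally, convexity together with $g(0)=g(1)$ forces $g(s)\le g(0)$ for every $s\in[0,1]$, so the trace identity above becomes
\[
\sum_{i,j} M_{i,j}(s)(y_i-y_j)^2 \;=\; 2\bigl(\tr[XY^2]-g(s)\bigr) \;\ge\; 0.
\]
There is no real obstacle here: the conversion of the combinatorial sum into a trace is the only step that requires some care (specifically the identity $\sum_j (X^s)_{i,j}(X^{1-s})_{j,i}=X_{i,i}$), after which the result drops out of convexity of $s\mapsto a^{1-s}b^s$. The small point worth noting is that $M(s)$ may have complex entries when $X$ is Hermitian but not real; the identity above shows the sum is nonetheless real, as $\tr[XY^2]$ and $g(s)$ are real.
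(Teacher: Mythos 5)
Your proof is correct and rests on the same core fact as the paper's: the identity $\sum_{i,j} M_{i,j}(s)(y_i-y_j)^2 = 2\tr[XY^2] - 2\tr[YX^sYX^{1-s}]$ together with the inequality $\tr[YX^sYX^{1-s}] \leq \tr[XY^2]$. Where the paper obtains this last inequality by invoking Lemma~\ref{conlem} (which requires the reader to notice that, with $Y$ diagonal, the hypothesis \eqref{convex} holds after swapping the roles of $X$ and $Y$, since $|(X^{1/2})_{i,j}|^2 \geq 0$), you prove it directly by diagonalizing $X$ and noting that $g(s) = \sum_{i,k}|\widetilde Y_{i,k}|^2\lambda_i^{1-s}\lambda_k^s$ is a nonnegative combination of convex functions equal at the endpoints; this is a bit cleaner and also avoids the paper's preliminary reduction to $y_i>0$, since your formula for $g(s)$ does not require the diagonal matrix $Y$ to be positive semidefinite.
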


\begin{proof} We may assume that the entries of  $y$ are positive since the left side of \eqref{oct} does not change when we add to $y$ any multiple of the vector each of whose entries is $1$. 

By Lemma~\ref{conlem} we know that for $X$ and any matrix $Y\geq 0$ (we replace $Y$ by $Y^2$ in Lemma~\ref{conlem} for convenience),
$$\tr [ X^{1-s} Y X^{s} Y]   \leq \tr[XY^2] = \tr[Y^2X]\ .$$  
Letting $Y$ be the diagonal matrix whose $j$th diagonal entry is $y_j$, this becomes
$$
\tr [ X^{1-s} Y X^{s} Y] = \sum_{i,j=1}^n  y_iM_{i,j}(s) y_j   =  \sum_{i,j=1}^n  y_jM_{i,j}(s) y_i \leq \sum_{i,j=1}^n  y_i^2M_{i,j}(s)  = 
\sum_{i,j=1}^n  M_{i,j}(s) y_j^2 \ .    $$
\end{proof}

We now claim that if $X\geq 0$ is a real $3\times 3$ matrix, for any $0<s< 1$, $M(s)$ has at most one entry above the diagonal that is negative. 
(By Remark \ref{Hadamard}, all diagonal entries are non-negative, and $M(s)$ is symmetric, so the same is true below the diagonal.)   
 To see this, take the vector $y$ to be of the form $(0,1,1)$, $(1,0,1)$ of $(1,1,0)$. Then for these choices,  \eqref{oct} becomes
\begin{equation}\label{oct2}
2(M_{1,2}(s) + M_{1,3}(s))\geq 0\  ,\quad   2(M_{1,2}(s) + M_{2,3}(s)) \geq 0 \ \quad{\rm and}\quad 2(M_{1,3}(s) + M_{2,3}(s)) \geq 0\ .
\end{equation}
Thus each pair of entries above the diagonal must have a non-negative sum, and hence no two can be negative. 

One might hope that one could construct counter-examples to \eqref{1.1} and \eqref{1.1b} by constructing matrices $X>0$ for which $M_{i,j}(t) < 0$ for all
$t\in (0,1/2)\cup(1/2,1)$. This is easy to do, but this alone does not yield counterexamples. 

For, example, define
${\displaystyle
X^{1/2} = \left[\begin{array}{ccc} 2 & \sqrt{2} & 0\\ \sqrt{2} & 2 & \sqrt{2}\\ 0 & \sqrt{2} & 2\end{array}\right]}$.
This matrix is easily diagonalized; the eigenvalues are $4$, $2$ and $0$. Since $X^{1/2}_{1,3} =0$, one might expect that $X^s_{1,3}$ changes sign at $s=1/2$, and only there, so that $M_{1,3}(s) \leq 0$ for all $0 < s < 1$. Indeed, doing the computations, one finds
\begin{equation}\label{ce1}
M_{1,3}(s) = -\frac{4^{-s}}{2}\left( 4^s -2 \right)^2 \leq 0 \quad{\rm while}\quad  M_{1,1}(s) = M_{3,3}(s) = \frac{4^{-s}}{2}\left( 4^s +2 \right)^2 
\end{equation}
 Now take $Y := \left[\begin{array}{ccc} a &0&0\\ 0 & 0 & 0\\ 0 & 0 & b\end{array}\right]$ with $a,b > 0$ and distinct. 
Then
\begin{equation}\label{ce2}
\tr [ X^{1-s} Y^{1-t} X^{s} Y^t]  =  M_{1,1}(s) a + M_{3,3}(s)b  +  M_{1,3}(s)(a^{1-t}b^t + a^t b^{1-t})\ .
\end{equation}
For fixed $s\notin\{0, 1/2,1\}$, this is strictly concave in $t$ and symmetric about $t=1/2$, 
so the maximum occurs only  at $t=1/2$, and the minimum only at $t\in \{0,1\}$.
However, since $\lim_{t\downarrow 0}Y^t = P :=  \left[\begin{array}{ccc} 1 &0&0\\ 0 & 0 & 0\\ 0 & 0 & 1\end{array}\right] \neq I$, 
we do not have 
$\lim_{t\downarrow 0} \tr [ X^{1-s} Y^{1-t} X^{s} Y^t] = \tr[XY]$,
which would provide a counterexample to \eqref{1.1}, but instead
$
\lim_{t\downarrow 0} \tr [ X^{1-s} Y^{1-t} X^{s} Y^t] = \tr [ X^{1-s} Y X^{s} P]$.
As we have just seen, this is less than $ \tr [ X^{1-s} Y^{1/2} X^{s} Y^{1/2}]$, and by Lemma~\ref{conlem}, this in turn is less than $\tr[XY]$.  In fact,
defining $h(t) := 4^{t-1/2} + 4^{1/2- t}$,  we can rewrite \eqref{ce1} as
\begin{equation}\label{ce14}
M_{1,3}(s) =2 - h(s) \quad{\rm and}\quad  M_{1,1}(s) = M_{3,3}(s) = 2 + h(s) \ .
\end{equation}
Then from \eqref{ce2},
\begin{equation}\label{ce5}
\tr [ X^{1-s} Y^{1-t} X^{s} Y^t]  =  2(a+b+ a^{t}b^{1-t} + a^{1-t}b^{t}) + h(s)(a+b - a^{t}b^{1-t} - a^{1-t}b^{t}) \ .
\end{equation}
By the arithmetic-geometric mean inequality, $a+b - a^{t}b^{1-t} - a^{1-t}b^{t} \geq 0$ for all $0 \leq t \leq 1$.  Since $h(s)$ is evidently convex 
and symmetric about $s=1/2$, for each fixed $t \in (0,1)$, $\tr [ X^{1-s} Y^{1-t} X^{s} Y^t]$ is a strictly convex function of $s$, symmetric 
about $s=1/2$. Therefore this function is  minimized only for $s= 1/2$  and maximized only for $s\in \{0,1\}$  and hence  for any $t$,
$$\tr [ X^{1-s} Y^{1-t} X^{s} Y^t]  \geq  \tr [ X^{1/2} Y^{1-t} X^{1/2} Y^t], $$
and the right side is independent of $t$ since $X^{1/2}_{1,3}  =0$. Hence  \eqref{1.1b} is satisfied for all choices of $a,b > 0$. Likewise,
by what was proved above, for all $s,t$,  with $Q := \lim_{s\downarrow 0}X^s$, which is an orthogonal projection,
$$\tr [ X^{1-s} Y^{1-t} X^{s} Y^t]  \leq  \tr [ X^{1-s} Y^{1/2} X^{s} Y^{1/2}] \leq  \tr [ Q Y^{1/2} X^{1} Y^{1/2}] \leq \tr[XY]\ ,$$
and hence 
 \eqref{1.1} is satisfied for all choices of $a,b > 0$. 
 
 This shows that the construction of counterexamples is more subtle than simply producing negative entries in $M(s)$.   It appears that the 
 key to the construction of counterexamples for $3\times3$ matrices is to choose $X$ so that one of the inequalities in \eqref{oct} to is nearly 
 saturated, with one of the summands negative for most values of $s$. Furthermore, it is natural to choose $X$ and $Y$ to be perturbations of 
 positive semidefinite matrices $X_0$ and $Y_0$ such that $\tr [ X_0^{1-s} Y_0^{1-t} X_0^{s} Y_0^t] = 0$ for all $0 \leq s,t \leq 1$.  
 Of course this is satisfied if $X_0$ and $Y_0$ are orthogonal projections with mutually orthogonal ranges. 
 
 Our construction relies on the {\em Householder reflections} determined by two distinct unit vectors $u,v\in \R^n$. This is given by
 $H_{u,v} := I - 2\|u-v\|^{-1}| u-v\rangle \langle u-v|$.
 Evidently, $H_{u,v}$ is self adjoint, orthogonal, and  $H_{u,v} u =v$ and $H_{u,v}v = u$. {\em For simplicity}, choose 
 \begin{equation}\label{ce21}
 u := (0,0,1)\qquad{\rm  and }\qquad
 v := 2^{-1/2}(1,1,0)\ .
 \end{equation}
  Then
 $$
 U :=  H_{u,v} = \frac12 \left[\begin{array}{ccc} \phantom{-}1 & -1& \sqrt{2} \\ -1 & \phantom{-}1& \sqrt{2} \\ \sqrt{2} & \sqrt{2}& 0\end{array}\right]
 $$
 Now choose
 $$
 Y_0 := \left[\begin{array}{ccc} 0 & 0 & 0\\ 0 & 0 & 0 \\ 0 & 0 & 1\end{array}\right]\quad{\rm and}\quad X_0 = UY_0U = \frac12
 \left[\begin{array}{ccc} 1 & 1 & 0\\ 1 & 1 & 0 \\ 0 & 0 & 2\end{array}\right]\ .
 $$
 Then $X_0$ and $Y_0$ are orthogonal projections such that $X_0Y_0 = 0$. 
 
 Now we make a simple perturbation. For $a,b > 0$, small, to be chosen later,  define
 $$
 A := \left[\begin{array}{ccc} a & 0 & 0\\ 0 & b & 0 \\ 0 & 0 & 1\end{array}\right]\quad{\rm and}\quad  Y :=  \left[\begin{array}{ccc} c & 0 & 0\\ 0 & d & 0 \\ 0 & 0 & 1\end{array}\right]
 $$
 and also for $0 < t < 1$, define
 $$
 \alpha := \frac14(a^t+b^t) \qquad{\rm and}\qquad \beta :=  \frac{\sqrt{2}}{4}(a^t-b^t) \ .
 $$
 Then
 $$
 UAU = X_0 + \left[\begin{array}{ccc}  \phantom{-}\alpha & - \alpha & \phantom{-}\beta\\ -\alpha  & \phantom{-}\alpha &-\beta\\
\phantom{-} \beta & -\beta & \phantom{-}2\alpha\end{array}\right]\ .
 $$
 The off-diagonal entries of $UYU$ will not change sign as $t$ varies, but we can make this happen by applying are further orthogonal transformation; define
 $R := \left[\begin{array}{ccc} \cos x & 0 &- \sin x\\ 0 & 1 & 0\\ \sin x & 0 & \phantom{-}\cos x\end{array}\right]$, and finally put
 $$X := RUAUR^T\ ,$$
 where $R^T$ is the transpose of $R$, with $x$, $a$ and $b$ to be chosen later. We compute
 $$
 X^t_{1,3}  = (\cos^2 x - \sin^2 x)\beta(t)  + \sin x \cos x \left(\frac12 - \alpha(t)\right)\ .
 $$
 and
 $$
 X^t_{2,3}  = -\cos(x)\beta(t)  + \sin x \left(\frac12 - \alpha(t)\right)\ .
 $$
 
 We seek a small perturbation of $X_0$, and hence we will take $a$, $b$ and $|x|$ all to be small positive numbers.   It is easy to see that the sign change we seek occurs in $X^t_{1,3}$  if we take $a \ll b \ll 1$ and $0 <x \ll 1$, and occurs in  $X^t_{2,3}$  if we take $b \ll a \ll 1$ and $0 <x \ll 1$

\begin{exam}\label{11}
To get a counterexample to \eqref{1.1}, take $a=10^{-10}$, $b= 10^{-19}$, $x=10^{-5}$, $c=10^{-10}$ and $d= 0$. Then one finds
$$
\tr[XY] < 1.50001 \times 10^{-10}  \quad{\rm while} \quad \tr[X^{0.79}Y^{0.79}X^{0.21}Y^{0.21} > 1.61022 \times 10^{-10} \ .
$$
\end{exam}
 
 \begin{exam}\label{11b}
 To get a counterexample to \eqref{1.1b}, take $a=10^{-19}$, $b= 10^{-10}$, $x=10^{-5}$, $c=10^{-10}$ and $d= 0$. Then one finds
 $$
 \tr[X^{0.98}Y^{0.98}X^{0.02}Y^{0.02}] < -2.38674\ ,
 $$
 which being negative, is certainly less that $\tr[X^{1/2}Y^{1/2}X^{1/2}Y^{1/2}] > 0$, and by continuity, somewhere the trace must be  zero.
 \end{exam}
 
  Notice that the  only difference between the two examples is 
 that we have swapped the values assigned to $a$ and $b$; all other parameters are left the same. Numerical plots show that in both cases, 
 the maximum value of
 $|X^t_{1,3} + X^t_{2,3}|$ is less that $10^{-3}$ times the maximum of  $|X^t_{1,3}| + |X^t_{2,3}|$,  so that the last inequality in \eqref{oct} is 
 nearly saturated; there is near cancellation in the sum $X^t_{1,3} + X^t_{2,3}$.   Notice that in our counterexample to \eqref{1.1}, the sum of the 
 exponents $s+t$ is $1.58$, not so much larger than the minimum value, $3/2$, at which such a counterexample cannot exist. It would be of interest to 
 see if one can build on this construction, possibly extending it into higher dimensions,  to show that the condition $s+t \leq 3/2$ in 
 Theorem~\ref{AHO1} is sharp. 
 
 We close by showing that the monotincity property for the Golden-Thompson Inequality descried in Theorem~\ref{GTThm} does not hold for arbitrary self adjoint matrices $H$ and $K$. 
 
 Recall that $f_{H,K}(u) $ has been defined by \eqref{fofu}
 \begin{equation}\label{fofu2}
f_{H,K}(u) = \tr[e^{H+ (1-u)K}e^{uK}]\ .
\end{equation}
$$
\frac{{\rm d}}{{\rm d}u}f_{H,K}(u)\bigg|_{u=1} = \tr[e^{H}Ke^{K}] - \int_0^1 \tr[e^{tH} K e^{(1-t)H} e^K]{\rm d}t\ .
$$
With $X$ and $Y$ as above, we define $K = \log(X)$, and $H = \log Y$. Since $H$ is diagonal, the integral $ \int_0^1 e^{tH} K e^{(1-t)H}{\rm d}t $ can be explicitly evaluated as a Hadamard product. One finds
$$
\frac{{\rm d}}{{\rm d}u}f_{H,K}(u)\bigg|_{u=1} < -3\times 10^{-6}\ ,
$$ 
This shows that the monotonicity proved in Theorem~\ref{GTThm} is not true for general self-adjoint $H$ and $K$.

\bigskip
\noindent{\bf \Large{Acknowledgements}}  

We thank Victoria Chayes and Rupert Frank for useful conversations.

\end{document}